\newcommand{\eps}{\varepsilon}
\newcommand{\poly}{\mathrm{poly}}
\newcommand{\N}{{\Bbb N}}
\newcommand{\cnd}{\mskip 1mu|\mskip 1mu}
\def\Bbb#1{\mathbb #1}
\renewcommand{\phi}{\varphi}
\renewcommand{\ge}{\geqslant}
\renewcommand{\le}{\leqslant}
\newtheorem{theorem}{Theorem}
\newtheorem{lemma}[theorem]{Lemma}
\newtheorem{corollary}[theorem]{Corollary}
\theoremstyle{remark}
\newtheorem{remark}{Remark}
\newtheorem{definition}{Definition}
\definecolor{nc}{rgb}{1,0.5,0}
\begin{document}

\pagestyle{plain}


\title{Some properties of antistochastic strings}
\author{Alexey Milovanov\\Moscow State University\\
{\tt almas239@gmail.com}}

\maketitle

\begin{abstract}

Algorithmic statistics is a part of algorithmic information theory (Kolmogorov complexity theory) that studies the following task: given a finite object $x$ (say, a binary string), find an `explanation' for it, i.e., a simple finite set that contains $x$ and where $x$ is a `typical element'. Both notions (`simple' and `typical') are defined in terms of Kolmogorov complexity.

It was found that this cannot be achieved for some objects: there are some ``non-stochastic'' objects that do not have good explanations. In this paper we study the properties of maximally non-stochastic objects; we call them ``antistochastic''.

It turns out the antistochastic strings have the following property (Theorem~\ref{th2}): if an antistochastic string has complexity $k$, then any $k$ bit of information about $x$ are enough to reconstruct $x$ (with logarithmic advice). In particular, if we erase all bits of this antistochastic string except for $k$, the erased bits can be restored from the remaining ones (with logarithmic advice). As a corollary we get the existence of good list-decoding codes with erasures (or other ways of deleting part of the information).

Antistochastic strings can also be used as a source of counterexamples in algorithmic information theory. We show that the symmetry of information property fails for total conditional complexity for antistochastic strings.  

\end{abstract}

\textbf{Keywords:} Kolmogorov complexity, algorithmic
statistics, stochastic strings, total conditional complexity,
symmetry of information.

\section{Introduction}

Let us recall the basic notion of algorithmic information theory and algorithmic statistics (see~\cite{shen15,lv,suv} for more details).

We consider strings over the binary alphabet $\{0,1\}$.
The set of all strings is denoted by $\{0,1\}^*$ and the length of
a string $x$ is denoted by $l(x)$. The empty string is denoted by $\Lambda$.

\subsection{Algorithmic information theory}

Let $D$ be a partial computable function mapping pairs of strings to strings.
\emph{Conditional Kolmogorov complexity} with respect to
$D$ is defined as
$$
C_D(x \cnd y)=\min\{l(p)\mid D(p,y)=x\}.
$$
In this context the function $D$ is called a \emph{description mode}
or a \emph{decompressor}. If $D(p,y)=x$
then $p$ is called a \emph{description of
$x$ conditional to $y$} or a \emph{program mapping $y$ to $x$}.

A decompressor $D$ is called \emph{universal}
if for every other decompressor $D'$ there is a string
$c$ such that $D'(p,y)=D(cp,y)$ for all $p,y$.
By Solomonoff---Kolmogorov theorem universal decompressors exist. We
pick arbitrary universal decompressor $D$ and call $C_D(x \cnd y)$ \emph{the Kolmogorov
complexity} of $x$ conditional to $y$, and denote it by $C(x \cnd y)$.
Then we define the unconditional Kolmogorov
complexity $C(x)$ of $x$ as $C(x \cnd \Lambda)$. (This version of Kolmogorov complexity is called \emph{plain} complexity; there are other versions, e.g., prefix complexity, monotone complexity etc., but for our purposes plain complexity is enough, since all our considerations have logarithmic precision.)

Kolmogorov complexity can be naturally extended to other finite objects (pairs of strings, finite sets of strings, etc.). We fix some computable bijection (``encoding'') between these objects are binary strings and define the complexity of an object as the complexity of the corresponding binary string. It is easy to see that this definition is invariant (change of the encoding changes the complexity only by $O(1)$ additive term).

In particular, we fix some computable bijection between strings and finite  subsets of $\{0,1\}^*$; the string that corresponds to a finite $A\subset  \{0,1\}^*$ is denoted by $[A]$. Then we understand $C(A)$ as $C([A])$. Similarly,  $C(x \cnd A)$ and $C(A \cnd x)$ are understood as $C(x \cnd [A])$ and $C([A]  \cnd x)$, etc.

\subsection{Algorithmic statistics}

Algorithmic statistics studies  explanations of observed data that are good in the algorithmic sense: an explanation should be simple and capture all the algorithmically discoverable regularities in the data. The data is encoded, say, by a binary string $x$. In this paper we consider explanations (statistical hypotheses) of the form ``$x$ was drawn at random from a finite set $A$ with uniform distribution''. (As argued in~\cite{vv}, the class of general probability distributions reduces to the class of uniform distributions over finite sets.)

Kolmogorov suggested in 1974~\cite{kolm} to  measure the quality of an explanation $A\ni x$ by two parameters, Kolmogorov complexity $C(A)$ of $A$ (the explanation should be simple) and the cardinality $|A|$ of $A$ (the smaller $|A|$ is, the more ``exact'' the explanation is). Both parameters cannot be very small simultaneously unless the string $x$ has very small Kolmogorov complexity. Indeed,
$C(A)+\log_2|A|\ge C(x)$ with logarithmic precision\footnote{In this paper we consider all the equations and inequalities for Kolmogorov complexities up to additive logarithmic terms ($O(\log n)$ for strings of length at most $n$).}, since $x$
can be specified by $A$ and its index (ordinal number) in $A$.
Kolmogorov called an explanation $A\ni x$ \emph{good} if
$C(A)\approx 0$ and $\log_2|A|\approx C(x)$,
that is, $\log_2|A|$
is as small as the inequality $C(A)+\log_2|A|\ge C(x)$ permits
given that  $C(A)\approx 0$. He
called a string \emph{stochastic} if it has such an explanation.

Every string $x$ of length $n$ has two trivial explanations:
$A_1=\{x\}$ and $A_2=\{0,1\}^n$. The first explanation is good when
the complexity of $x$ is small. The second one
is good when the string $x$ is random, that is, its complexity $C(x)$
is close to $n$. Otherwise, when $C(x)$ is far both from $0$ and $n$, neither of them is good.

Informally, non-stochastic strings are those having no good
explanation. They were studied in~\cite{gtv,vv}.
To define non-stochasticity rigorously we have
to introduce the notion of the \emph{profile} of $x$, which represents the parameters of possible explanations for $x$.

\begin{definition}
The  \emph{profile} of a string $x$ is the set $P_x$
consisting of all pairs $(m, l)$ of natural numbers such that there exists a finite set $A \ni x$ with $C(A) \le m$ and $\log_2|A| \le l$.
\end{definition}

Figure~\ref{f1} shows how the profile of a string $x$ of length $n$ and complexity $k$ may look like.

\begin{figure}[h]

\begin{center}
\includegraphics[scale=1]{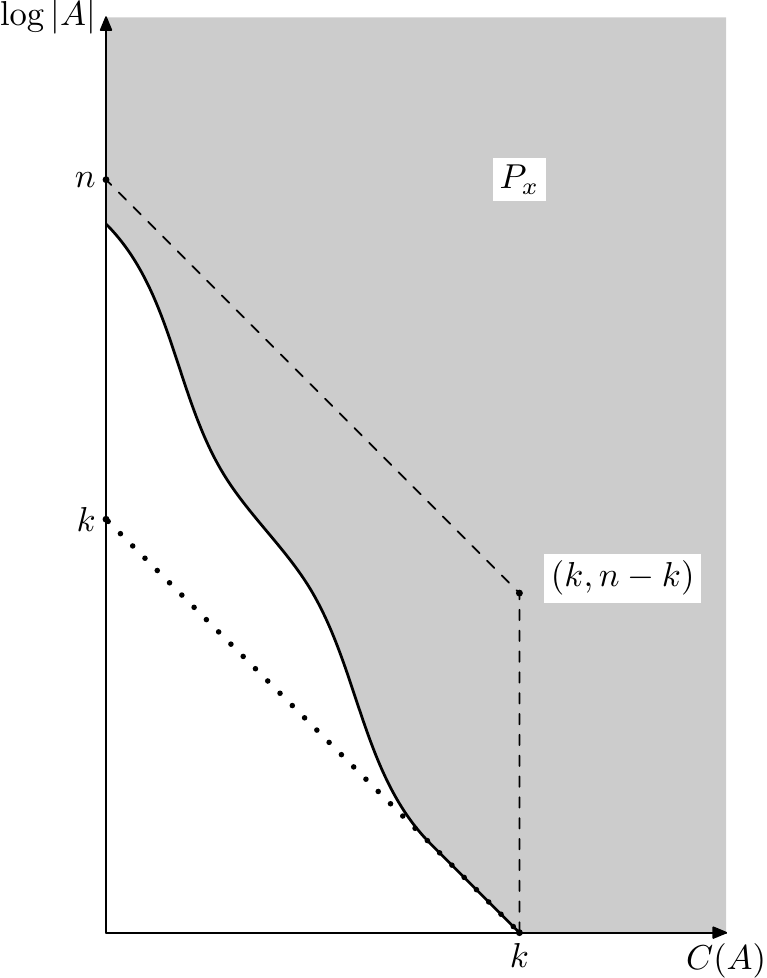}
\end{center}
\caption{The profile $P_x$ of a string $x$ of length $n$ and
complexity $k$.}
\label{f1}
\end{figure}

The profile of every string $x$ of length $n$ and complexity $k$
has the following three properties.
\begin{itemize}
\item
First,  $P_x$ is upward closed: if $P_x$ contains a pair $(m,l)$,
then $P_x$ contains all the pairs $(m',l')$ with $m'\ge m$ and $l'\ge l$.
\item
Second, $P_x$ contains the set
$$
P_{\mathrm{min}}=\{(m,l)\mid m+l\ge n \text{ or } m\ge k\}
$$
(the set  consisting of all pairs above and to the right of the dashed line
on Fig.~\ref{f1}) and is included into the set
$$
P_{\mathrm{max}}=\{(m,l)\mid m+l\ge k\}
$$
(the set
consisting of all pairs above and to the right of the dotted  line on
Fig.~\ref{f1}). In other words, the border line of $P_x$ (Kolmogorov called it the \emph{structure function} of $x$), lies between the dotted line and the dashed line.

Both inclusions are understood with logarithmic precision: the set $P_{\mathrm{min}}$ is included in the $O(\log n)$-neighborhood of the set $P_x$, and $P_x$ is included in the $O(\log n)$-neighborhood of the set $P_{\mathrm{max}}$.
\item
Finally, $P_x$ has the following property:
\begin{equation*}
\begin{split}
\text{if a pair $(m,l)$ is in $P_x$,
then}\hspace*{11em}\\
\text{the pair $(m+i+O(\log n),l-i)$ is in  $P_x$ for all $i\le l$.}
\end{split}
\end{equation*}
\end{itemize}

If for some strings $x$ and $y$ the inclusion $P_x \subset P_y$ holds, then we can say informally that $y$ is ``more stochastic'' then $x$. The largest possible profile is close to the set $P_{\mathrm{max}}$. Such a profile is possessed, for instance, by a random string of length $k$ with $n-k$ trailing zeros. As we will see soon, the minimal possible profile is close to $P_{\mathrm{max}}$; this happens for antistochastic strings.

It was shown is~\cite{vv} that every profile that has these three properties is possible for a string of length $n$ and complexity $k$ with logarithmic precision:
\begin{theorem}[Vereshchagin, Vitanyi]
\label{th1}
Assume that we are  given an upward closed
set $P$ of pairs of natural numbers
which includes $P_{\mathrm{min}}$ and is included into
$P_{\mathrm{max}}$ and for
all $(m,l)\in P$ and all $i\le l$ we have
$(m+i,l-i)\in P$.
Then there is a string $x$
of length $n$ and complexity $k+O(\log n)$ whose profile is at most
$C(P)+O(\log n)$-close to $P$.
\end{theorem}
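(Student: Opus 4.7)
The plan is to realize $P$ as the profile of a string $x$ by constructing a staircase of nested finite sets matching the corners of $P$'s boundary, and then selecting $x$ to be a suitably random element of the innermost set.

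First I extract from $P$ the corners $(m_0, l_0), (m_1, l_1), \ldots, (m_s, l_s)$ of its lower-left boundary, listed with $m_0 < m_1 < \cdots < m_s$ and $l_0 > l_1 > \cdots > l_s$. The hypothesis $P \supseteq P_{\min}$ places the leftmost corner near $(0, n)$; $P \subseteq P_{\max}$ places the rightmost near $m_s + l_s = k$; and the staircase condition (third bullet) forces each diagonal segment of the boundary to have slope exactly $-1$, so that between two consecutive corners the boundary drops vertically and then runs diagonally. Once $P$ is available, all corner coordinates are computable at cost $C(P) + O(\log n)$.

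Next I construct inductively a nested chain $\{0,1\}^n = A_0 \supseteq A_1 \supseteq \cdots \supseteq A_s$ with $|A_i| \le 2^{l_i}$ and $C(A_i) \le m_i + C(P) + O(\log n)$. The idea is to use the incremental budget $m_i - m_{i-1}$ as a ``seed'' that picks a specific size-$2^{l_i}$ subset of $A_{i-1}$ via a canonical rule (say, the first such subset in the enumeration indexed by the seed). This controls $|A_i|$ and simultaneously raises $C(A_i)$ close to $m_i$ --- crucial for the lower-bound direction of the profile, since an $A_i$ of complexity much smaller than $m_i$ would force $P_x$ to extend further to the left than $P$ does.

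I then pick $x \in A_s$ that is random with respect to the chain, meaning that for every pair $(m,l) \notin P$ close to the boundary, no set $B \ni x$ with $C(B) \le m$ and $|B| \le 2^l$ exists. Such $x$ exists by a union bound: for each such $(m,l)$, the set of ``bad'' strings lying in some such $B$ has cardinality at most $2^{m+l}$ (there are at most $2^m$ candidate $B$'s, each of size $\le 2^l$), and once intersected with $A_s$ this is $o(|A_s|)$ because $(m,l) \notin P$ means $m + l < m_s + l_s = k$ in the relevant region. From $x \in A_0$ we get $|x| = n$, and combining $x \in A_s$ with the randomness gives $C(x) = k + O(\log n) + O(C(P))$. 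Finally, $P \subseteq P_x$ is witnessed directly by the sets $A_i$ together with the universal staircase property of profiles, while $P_x \subseteq P$ is exactly what the random choice of $x$ enforces.

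The main obstacle is the second step: constructing the nested chain with the exact prescribed complexities. The tension is that a ``canonical'' subset of $A_{i-1}$ of size $2^{l_i}$ typically has complexity $O(\log n) + C(A_{i-1})$, which is far below $m_i$, so $C(A_i)$ must be raised by injecting pseudorandom bits --- but this has to be done without destroying the nesting and without shrinking $A_i$ so much that no suitably random $x$ remains. Ensuring that the injected randomness, the nesting, and the final selection of $x$ all coexist is the technical heart of the argument, and is where the staircase property of $P$ is used in an essential way to line up the slopes so the inductive bounds close.
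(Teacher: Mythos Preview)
First, note that the paper does not prove Theorem~\ref{th1}: it is quoted from~\cite{vv}, and only the special case needed later (Corollary~\ref{c1}) is proved here.  So there is no ``paper's own proof'' to compare against; what follows is an assessment of your plan against the Vereshchagin--Vit\'anyi argument.

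Your overall architecture---extract the corners of the boundary of $P$, build a nested chain $A_0\supseteq\cdots\supseteq A_s$ matching those corners, then select $x$ in the innermost set---is exactly the skeleton of the original proof.  The difficulty you correctly isolate in your last paragraph is also the right one.  But the plan, as written, does not get past it, and in fact the two remaining steps are more entangled than you indicate.

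The phrase ``inject pseudorandom bits'' to raise $C(A_i)$ to $m_i$ is not a technique: you never say what bits, where they come from, or why the resulting $A_i$ has complexity \emph{at least} $m_i$ rather than at most $m_i$.  More importantly, the union-bound argument in your third step does not stand on its own.  You claim that for $(m,l)\notin P$ ``in the relevant region'' one has $m+l<k$, so the bad strings number at most $2^{m+l}<|A_s|$.  That is true only near the final vertical drop of the boundary; along a diagonal segment below a corner $(m_i,l_i)$, points $(m,l)\notin P$ satisfy $m+l=m_i+l_i-O(1)$, which can be as large as $n-O(1)$.  The bad set there can have size close to $2^n$, and nothing you have said prevents it from containing all of $A_s$.

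In the actual proof these two problems are solved together: each $A_i$ is not chosen via auxiliary random bits but by \emph{diagonalization}.  One enumerates, at stage $i$, all sets $B$ of complexity below the relevant threshold and of size below $2^{l_{i-1}}$, and takes $A_i\subseteq A_{i-1}$ of size $2^{l_i}$ not covered by their union (a counting argument using $m_i+l_i\ge k$ and the third hypothesis on $P$ shows this is possible).  This single move does both jobs at once: it forces $C(A_i)\ge m_i-O(\log n)$ (otherwise $A_i$ would be one of the enumerated $B$'s and would cover itself), and it builds the avoidance of simple sets into the chain, so that the final $x$ automatically lies outside every witness for points $(m,l)\notin P$.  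Your plan separates ``make $A_i$ complex'' from ``make $x$ avoid simple sets''; the point is that these must be the \emph{same} construction.
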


In this theorem, we say that two subsets of
$\N^2$ are \emph{$\eps$-close} if each of them is contained in
the $\eps$-neighborhood of the other. This result mentions the complexity of the set $P$ that is not a finite set. Nevertheless, a set $P$ that satisfies the assumption is determined by the function $h(l)=\min\{m\mid (m,l)\in P\}$. This function has only finitely many non-zero values, as $h(k)=h(k+1)=\ldots=0$. Hence $h$ is a finite object, so we define the complexity of $C(P)$ as the complexity of $h$ (a finite object).

For the set  $P_{\text{min}}$ the corresponding function $h$ is defined as follows: $h(m)=n-m$ for $m<k$ and $h(k)=h(k+1)=\ldots=0$. Thus the Kolmogorov complexity of this set is $O(\log n)$. Theorem~\ref{th1} guarantees then that there is a string $x$  of length about $n$ and complexity about $k$
whose profile $P_x$ is close to the set $P_{\text{min}}$.
We call such strings \emph{antistochastic}.

The main result of our paper (Theorem~\ref{th2}) says that an antistochastic string $x$ of length $n$ can be reconstructed with logarithmic advice from every finite set $A$ that contains $x$ and has size $2^{n-k}$ (thus providing $k$ bits of information about $x$). We prove this in Section~\ref{sec:main}.

Then in Section~\ref{sec:erasures} we show that a known result about list decoding for erasure codes is a simple corollary of the properties of antistochastic strings, as well as some its generalizations. 

In Section~\ref{sec:total} we use antistochastic strings to construct an example where the so-called total conditional complexity is maximally far from standard conditional complexity: a tuple of strings $x_i$ such that conditional complexity $C(x_i \cnd x_j)$ is small while the total conditional complexity of $x_i$ given all other $x_j$ as a condition, is maximal (Theorem~\ref{thltc}).

\section{Antistochastic strings}\label{sec:main}

\begin{definition}
A string $x$ of length $n$ and complexity
$k$ is called \emph{$\eps$-antistochastic}
if for all $(m,l)\in P_x$ either
$m>k-\eps$, or $m+l>n-\eps$ (in other words, if $P_x$ is close enough to $P_{\mathrm{min}}$, see Figure~\ref{f2}).
\end{definition}

\begin{figure}[h]

\begin{center}
\includegraphics[scale=1]{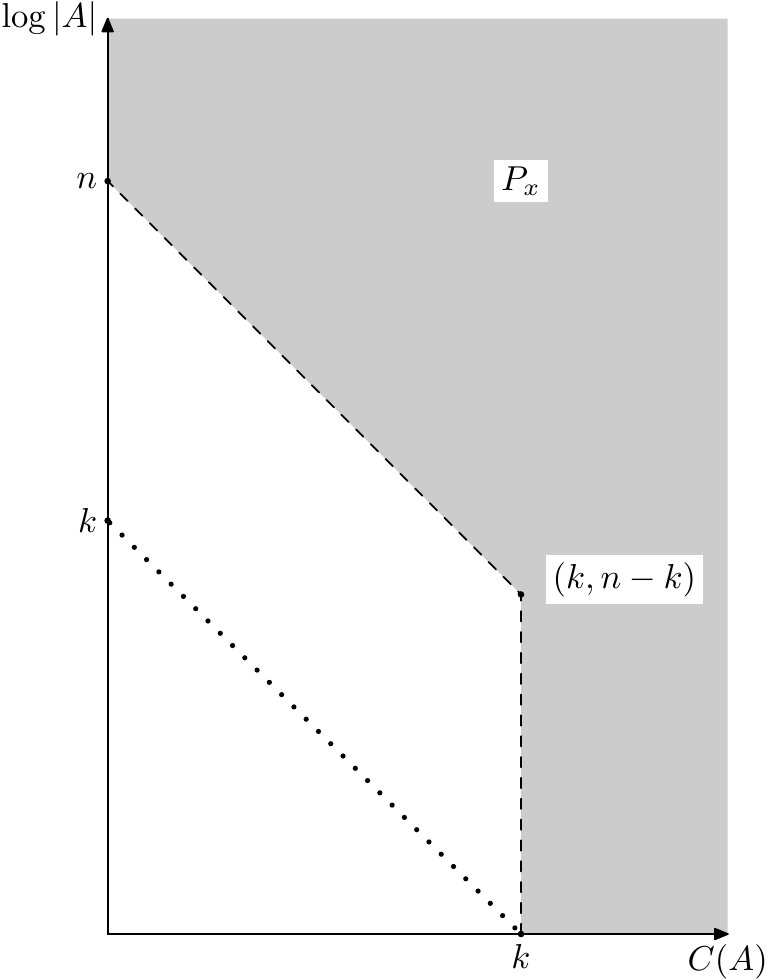}
\end{center}
\caption{The profile of an $\eps$-antistochastic string $x$ for a very small $\eps$ is close to $P_\mathrm{min}$.}
\label{f2}
\end{figure}

By  Theorem~\ref{th1} antistochastic strings exist. More precisely, Theorem~\ref{th1} has the following corollary:
\begin{corollary}\label{c1}
For all $n$ and all $k\le n$
there exists an $O(\log n)$-antistochastic string $x$ of length $n$ and complexity
$k+O(\log n)$.
\end{corollary}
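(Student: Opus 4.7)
The plan is to apply Theorem~\ref{th1} directly with $P := P_{\mathrm{min}}$. First I would check that $P_{\mathrm{min}}$ meets the hypotheses of that theorem. By definition $P_{\mathrm{min}} \supseteq P_{\mathrm{min}}$ and $P_{\mathrm{min}} \subseteq P_{\mathrm{max}}$ (indeed, on the region $m<k$ we have $m+l \ge n \ge k$, and on $m\ge k$ we have $m+l\ge m\ge k$). Upward closedness is immediate from the definition, since both defining conditions $m+l\ge n$ and $m\ge k$ are preserved when $m$ or $l$ grows. The shift property $(m,l)\in P \Rightarrow (m+i,l-i)\in P$ for $i\le l$ is also easy: if $m+l\ge n$ then $(m+i)+(l-i)=m+l\ge n$; and if $m\ge k$ then $m+i\ge k$.

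Next I would bound $C(P_{\mathrm{min}})$. Following the discussion after Theorem~\ref{th1}, the set is encoded by its boundary function $h$ with $h(m)=n-m$ for $m<k$ and $h(m)=0$ for $m\ge k$. This function is specified entirely by the two numbers $n$ and $k$, each of which can be written in $O(\log n)$ bits, so $C(P_{\mathrm{min}}) = O(\log n)$.

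Applying Theorem~\ref{th1} then yields a string $x$ of length $n$ and complexity $k+O(\log n)$ whose profile $P_x$ is $O(\log n)$-close to $P_{\mathrm{min}}$. It remains to translate this closeness into the antistochasticity estimate required by the definition. Let $\eps = O(\log n)$ denote the closeness constant and pick any $(m,l)\in P_x$. By definition of $\eps$-closeness there exists $(m',l')\in P_{\mathrm{min}}$ with $|m-m'|\le \eps$ and $|l-l'|\le \eps$. Since $(m',l')\in P_{\mathrm{min}}$, either $m'\ge k$ or $m'+l'\ge n$; in the first case $m\ge k-\eps$, and in the second $m+l\ge n-2\eps$. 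So $x$ is $O(\log n)$-antistochastic, as required.

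The argument is essentially bookkeeping on top of Theorem~\ref{th1}; the only conceptual point to watch is the last step, where one has to be careful that $\eps$-closeness in $\mathbb{N}^2$ (symmetric Hausdorff-type distance) translates into the one-sided inequality used in the definition of antistochasticity, possibly with a small constant multiplier that is harmless inside the $O(\log n)$ bound.
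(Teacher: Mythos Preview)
Your proof is correct: Theorem~\ref{th1} applied to $P=P_{\mathrm{min}}$ does the job, and you have verified the hypotheses and the complexity bound $C(P_{\mathrm{min}})=O(\log n)$ carefully. The one small thing to note is that the string produced has complexity $k'=k+O(\log n)$ rather than exactly $k$, so in the last step the condition ``$m>k-\eps$'' should really be ``$m>k'-\eps'$''; but since $|k-k'|=O(\log n)$ this is absorbed into the $O(\log n)$ you already allow.

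However, this is not the route the paper takes. The paper explicitly remarks that the corollary follows from Theorem~\ref{th1} (exactly your argument), but then says that a direct proof is easier than the general statement of Theorem~\ref{th1} and gives that direct proof instead. The paper first proves Lemma~\ref{l3}, which says that if $(k-\eps,n-k)\notin P_x$ then $x$ is $(\eps+O(\log n))$-antistochastic; then it constructs $x$ explicitly as the lexicographically least $n$-bit string avoided by every finite set of complexity below $k$ and log-cardinality at most $n-k$, and checks that this $x$ has complexity $k+O(\log n)$ and satisfies the hypothesis of Lemma~\ref{l3}.

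The trade-off is clear: your argument is shorter as written but treats Theorem~\ref{th1} as a black box, and that theorem is substantially harder to prove than the corollary itself. The paper's construction is self-contained and elementary (a counting argument plus Lemma~\ref{l3}), which is why it is preferred there.
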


This corollary can be proved more easily than the general statement of Theorem~\ref{th1}, so we reproduce its proof for the sake of completeness.

\begin{proof}
We first formulate a sufficient condition for antistochasticity.

\begin{lemma}\label{l3}
If the profile of a string $x$ of length $n$ and complexity $k$
does not contain the pair $(k-\eps,n-k)$, then $x$ is $\eps+O(\log n)$-antistochastic.
\end{lemma}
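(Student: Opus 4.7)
My plan is to prove the contrapositive: if $x$ is not $(\eps + c\log n)$-antistochastic for a suitable constant $c$, then $(k-\eps, n-k)\in P_x$. The key tool is the third (``staircase'') property of profiles stated just before Theorem~\ref{th1}, which lets us slide any pair $(m,l)\in P_x$ down the anti-diagonal at a cost of $O(\log n)$ in the first coordinate, combined with upward closure.

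Concretely, suppose $(m,l)\in P_x$ with $m\le k-\eps'$ and $m+l\le n-\eps'$, where $\eps'=\eps+c\log n$ and $c$ absorbs all the forthcoming $O(\log n)$ terms. I want to produce a pair $(m',l')\in P_x$ with $m'\le k-\eps$ and $l'\le n-k$, because then upward closure of $P_x$ immediately yields $(k-\eps,n-k)\in P_x$, contradicting the hypothesis. If already $l\le n-k$, take $(m',l')=(m,l)$. Otherwise set $i=l-(n-k)>0$; then $i\le l$, so by the staircase property the pair
\[
 (m+i+O(\log n),\; l-i) \;=\; \bigl(m+l-(n-k)+O(\log n),\; n-k\bigr)
\]
lies in $P_x$. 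Its first coordinate is at most $(n-\eps')-(n-k)+O(\log n)=k-\eps'+O(\log n)$, which is $\le k-\eps$ once $c$ is chosen large enough. Its second coordinate is exactly $n-k$, as desired.

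The only delicate point — and the reason the lemma is phrased with slack $\eps+O(\log n)$ rather than just $\eps$ — is bookkeeping of the logarithmic error in the staircase step, which is why $\eps$ has to be weakened by an $O(\log n)$ term when passing from ``$(k-\eps,n-k)\notin P_x$'' to ``every pair of $P_x$ lies above $k-\eps$ or above the line $m+l=n-\eps$.'' Apart from that, both ingredients used are standard properties of profiles listed just before the statement, so no new machinery is needed. I expect this short argument to be exactly what the paper gives.
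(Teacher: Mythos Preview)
Your proof is correct and is essentially the paper's own argument, merely phrased as a contrapositive: the paper takes an arbitrary $(m,l)\in P_x$ with $m\le k-\eps$, notes that then $l>n-k$ (else upward closure would put $(k-\eps,n-k)$ into $P_x$), applies the staircase property with $i=l-(n-k)$ exactly as you do, and concludes $m+l>n-\eps-O(\log n)$ from the fact that the resulting first coordinate must exceed $k-\eps$. The computation and the key step are identical to yours.
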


Notice that the condition of this lemma is a special case of the definition of $\eps$-antistochasticity. So Lemma~\ref{l3} can be considered as an equivalent (with logarithmic precision) definition of $\eps$-antistochasticity.

\begin{proof}[Proof of Lemma~\textup{\ref{l3}}]
Assume that a pair $(m,l)$ is in the profile of $x$. We will show that either
$m>k-\eps$ or $m+l> n-\eps-O(\log n)$. Assume that
$m\le k-\eps$ and hence $l> n-k$. By the third property of profiles we see that  the pair
$$
(m+(l-(n-k))+O(\log n), n-k)
$$
is in its profile as well. Hence we have
$$
m+l-(n-k)+O(\log n)>k-\eps
$$
and
$$
m+l>n-\eps-O(\log n).
$$
\end{proof}

We return now to the proof of Corollary~\ref{c1}.
Consider the family $\mathcal A$ consisting of all finite sets $A$ of complexity less than $k$ and log-cardinality
at most $n-k$.  The number of such sets is less than $2^k$ (they have descriptions shorter than $k$) and thus
the total number of strings in all these sets
is less than $2^{k}2^{n-k}=2^n$.
Hence there exists a string of length $n$ that
does not belong to any of sets from $\mathcal{A}$. Let $x$ be the lexicographically least such string.

Let us show that the complexity of $x$ is $k+O(\log n)$. It
is at least $k-O(1)$, as by construction the singleton $\{x\}$ has complexity at least $k$.
On the other hand, the complexity of $x$ is at most
$\log |\mathcal A|+O(\log n)\le k+O(\log n)$. Indeed, the family $\mathcal A$ can be found from $k,n$ and $|\mathcal A|$, as we can enumerate
$\mathcal A$ until we get $|\mathcal A|$ sets, and the complexity of $|\mathcal{A}|$ is bounded by $\log|\mathcal{A}|+O(1)$, while complexities of $k$ and $n$ are bounded by $O(\log n)$.

By construction  $x$ satisfies the condition of Lemma~\ref{l3} with  $\eps=O(\log n)$.
Hence $x$ is $O(\log n)$-antistochastic.
\end{proof}

Before proving our main result, set us recall some tools that are needed for it. For any integer $i$ let $\Omega_i$  denote
the number of strings of complexity at most $i$.
Knowing $i$ and $\Omega_i$, we can compute a string of Kolmogorov complexity more than $i$,
so $C(\Omega_i)=i+O(\log i)$ (in fact, one can show that $C(\Omega_i)=i+O(1)$, but logarithmic precision is enough for us).
If $l \le m$ then the leading  $l$ bits of $\Omega_m$
contain the same information as $\Omega_l$ (see~\cite[Theorem VIII.2]{vv} and \cite[Problem 367]{suv} for the proof):

\begin{lemma}\label{l2}
Assume that $l\le m$
and let $(\Omega_m)_{1:l}$ denote  the leading  $l$ bits of $\Omega_m$.
Then both  $C((\Omega_m)_{1:l} \cnd \Omega_l)$
and  $C(\Omega_l \cnd (\Omega_m)_{1:l})$ are of order $O(\log m)$.
\end{lemma}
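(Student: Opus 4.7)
The plan is to establish both inequalities by the standard enumeration argument, using the given data as a stopping criterion in each direction. I describe the first direction in detail; the second will be symmetric.

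For $C(\Omega_l \cnd (\Omega_m)_{1:l}) = O(\log m)$: given the leading $l$ bits of $\Omega_m$ together with $m$ and $l$ (in $O(\log m)$ bits), I know $\Omega_m$ to within additive precision $2^{m-l+1}$. I will enumerate descriptions of length at most $m$ in length-first order, tracking the running number $N(t)$ of distinct outputs, and stop as soon as the leading $l$ bits of $N(t)$ agree with $(\Omega_m)_{1:l}$. This yields a set $U$ computable from the conditional data that contains all but at most $2^{m-l+1}$ strings of complexity at most $m$. Because the enumeration proceeds in length order, for each $x \in U$ the length of its first enumerated description equals $C(x)$ exactly, so $K := |\{x \in U : C(x) \le l\}|$ is computable and satisfies $0 \le \Omega_l - K < 2^{m-l+1}$.

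To recover the exact value $\Omega_l$ from $K$ using only $O(\log m)$ extra bits, I will exploit that the correct $\Omega_l$ is characterised among candidates $\nu \in [K, K + 2^{m-l+1})$ by the condition that the enumeration of descriptions of length $\le l$ eventually produces exactly $\nu$ distinct outputs; only $O(\log m)$ bits of advice should be needed to pin down the correct candidate, since $\Omega_l$ has Kolmogorov complexity $l + O(\log l)$ while $(\Omega_m)_{1:l}$ already supplies $l$ bits of information. The reverse direction $C((\Omega_m)_{1:l} \cnd \Omega_l) = O(\log m)$ is symmetric: from $\Omega_l$ I compute $S_l := \{x : C(x) \le l\}$ exactly along with the complexity of each of its elements, then continue enumerating descriptions of length $\le m$ and use $O(\log m)$ advice bits to specify the cutoff at which the running count's leading bits stabilise to $(\Omega_m)_{1:l}$.

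The main obstacle will be the final accounting step: showing that the correction really is $O(\log m)$ rather than the $\min(l, m-l)$ one might guess from the interval alone. The naive enumeration only forces the correction to lie in an interval of size $2^{m-l+1}$; the improvement relies on the fact that both $\Omega_l$ and $(\Omega_m)_{1:l}$ have Kolmogorov complexity $l + O(\log m)$, so given either the other is forced up to logarithmic slack. I would carry out this step by iterating the enumeration at progressively finer precision, or by directly exhibiting a short conditional program computing the correction, in the spirit of the standard proofs of analogous facts for the halting probability.
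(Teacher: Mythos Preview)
The paper does not prove Lemma~\ref{l2} itself; it simply cites \cite[Theorem VIII.2]{vv} and \cite[Problem 367]{suv}. So there is no in-paper argument to compare against. I can, however, point out where your plan breaks down.

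\medskip

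\textbf{Gap 1: the ``length-first'' step does not give you exact complexities.}
You write that by enumerating descriptions ``in length-first order'', for each $x\in U$ the first enumerated description has length exactly $C(x)$, so that
\[
K \;=\; \bigl|\{x\in U : C(x)\le l\}\bigr|
\]
is computable from the conditional data. No computable dovetailing achieves this: a program of length $>l$ may halt long before the shortest program for the same output, so the first description you see for $x$ need not be a shortest one. You therefore cannot decide, for a given $x\in U=E_m(T)$, whether $C(x)\le l$. What you \emph{can} compute is $K'=|E_l(T)|$, the number of outputs produced by programs of length $\le l$ by time $T$; but then the bound $\Omega_l-K'<2^{m-l+1}$ is no longer obvious, since a string of complexity $\le l$ may already lie in $E_m(T)$ (via a long program) while its short program has not yet halted.

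\medskip

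\textbf{Gap 2: the ``final accounting step'' is the entire lemma.}
Even granting your interval $\Omega_l\in[K,K+2^{m-l+1})$, you still need to pin down $\Omega_l$ with $O(\log m)$ bits rather than $m-l$ bits. Your proposed justification---``both $\Omega_l$ and $(\Omega_m)_{1:l}$ have complexity $l+O(\log m)$, so given either the other is forced up to logarithmic slack''---is a non sequitur: two strings of equal complexity can have maximal conditional complexity relative to one another (take two independent random $l$-bit strings). Symmetry of information only tells you that the two conditional complexities $C(\Omega_l\mid(\Omega_m)_{1:l})$ and $C((\Omega_m)_{1:l}\mid\Omega_l)$ are equal up to $O(\log m)$; it does not by itself make either one small. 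The phrases ``iterating the enumeration at progressively finer precision'' and ``in the spirit of the standard proofs for the halting probability'' are placeholders, not an argument. This step is precisely the substance of the lemma, and it is not addressed.

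\medskip

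In short, your proposal correctly identifies the enumeration/stopping-time framework that the cited proofs use, but the two places where real work is required---computing the right count at the stopping time, and showing the residual correction is logarithmic---are either argued incorrectly or left open.
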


Every  antistochastic string $x$ of complexity $k<l(x)-O(\log l(x))$ contains the same information as $\Omega_k$:

\begin{lemma}\label{l1}
There exists a constant $c$ such that the following
holds. Let  $x$ be an $\eps$-antistochastic string of length $n$
and complexity $k<n-\eps-c\log n$. Then both $C(\Omega_k \cnd x)$ and  $C(x \cnd \Omega_k)$
are less than $\eps + c\log n$.
\end{lemma}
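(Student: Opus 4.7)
The plan is to first prove $C(\Omega_k \cnd x) \le \eps + O(\log n)$ and then derive the other bound $C(x \cnd \Omega_k) \le \eps + O(\log n)$ from symmetry of information for plain Kolmogorov complexity: since $C(x,\Omega_k) = C(x) + C(\Omega_k \cnd x) + O(\log n) = C(\Omega_k) + C(x \cnd \Omega_k) + O(\log n)$ while $C(x) = k$ and $C(\Omega_k) = k + O(\log n)$, the two conditional complexities differ only by $O(\log n)$, and so one direction immediately yields the other.

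To bound $C(\Omega_k \cnd x)$, I will use $x$ to pin down $\Omega_k$ up to a short correction. Start by enumerating all strings of Kolmogorov complexity at most $k$ as $z_1, z_2, \ldots, z_{\Omega_k}$; this enumeration is computable from $k$. Let $T_x$ denote the position of $x$ in it. Then $T_x$ is computable from $x, k, n$ and hence costs only $O(\log n)$ bits given $x$. Fix a parameter $t$ slightly larger than $\eps$ (to be tuned as $t = \eps + c' \log n$ for a suitable constant $c'$), partition the enumeration into consecutive blocks of size $2^t$, and let $B$ be the block containing $x$, namely $B = \{z_{m\cdot 2^t + 1},\ldots, z_{(m+1)\cdot 2^t}\}$ where $m = \lfloor (T_x - 1)/2^t\rfloor$.

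The core dichotomy is whether $B$ is \emph{complete}---all $2^t$ positions filled, equivalently $(m+1)\cdot 2^t \le \Omega_k$---or whether it is the final, possibly truncated, block. If $B$ is complete, it is computable from $m, t, k, n$ alone: run the enumeration until exactly $(m+1)\cdot 2^t$ strings have appeared, and output the last $2^t$. Since $m \le \Omega_k/2^t \le 2^{k+1-t}$, this gives $C(B) \le k - t + O(\log n)$, so $(k - t + O(\log n),\, t) \in P_x$. Antistochasticity forces either $k - t + O(\log n) > k - \eps$, which fails for $t \ge \eps + c'\log n$ with $c'$ large enough, or $k + O(\log n) > n - \eps$, which is ruled out by the hypothesis $k < n - \eps - c\log n$ with $c$ large enough. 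Choosing $t = \eps + c'\log n$ appropriately makes both alternatives fail, so $B$ cannot be complete.

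Hence $B$ is the last, truncated block, which means that $T_x$ and $\Omega_k$ both lie in $[m\cdot 2^t + 1, (m+1)\cdot 2^t]$, giving $0 \le \Omega_k - T_x < 2^t$. From $x$ we recover $T_x$, and then only $t + O(1) = \eps + O(\log n)$ additional bits are needed to specify the difference $\Omega_k - T_x$, proving the bound. The main technical obstacle is the bookkeeping of logarithmic slack---making the constant $c$ in the hypothesis and the constants hidden in ``$O(\log n)$'' cohere so that both antistochasticity alternatives are genuinely killed in the complete case---and the slightly delicate point that completeness of a block is precisely what allows the enumeration-based algorithm to halt, so that one gets a bound on the (plain) complexity of $B$ rather than merely on an enumerable approximation of it.
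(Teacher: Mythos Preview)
Your proof is correct and follows essentially the same approach as the paper: enumerate strings of complexity at most $k$, use a block (``chunk'') model for the enumeration together with $\eps$-antistochasticity to force $x$ to lie near the end (so that $\Omega_k - T_x < 2^{\eps+O(\log n)}$), and then invoke symmetry of information for the reverse inequality. The only cosmetic difference is that the paper sets the block size to $2^l$ with $l=\lfloor\log N\rfloor$ for $N=\Omega_k-T_x$ and derives $l\le\eps+O(\log n)$ directly, whereas you fix $t=\eps+c'\log n$ in advance and argue by contradiction that the block containing $x$ must be the final incomplete one; the two arguments are logically equivalent.
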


Actually this lemma is true for all
strings whose profile $P_x$ does not contain the
pair $(k-\eps+O(\log k),\eps+O(\log k))$, in which form it
was essentially proven in~\cite{gtv}. The lemma goes back to L. Levin
(personal communication, see~\cite{vv} for details).

\begin{proof}[Proof of Lemma~\textup{\ref{l1}}]
Let us prove first that $C(\Omega_k \cnd x)$ is small.
Fix an algorithm that given $k$ enumerates all strings of
complexity at most $k$.  Let $N$ denote the number of strings
that appear after $x$ in the enumeration of all strings of complexity at most $k$ (if $x$ turns out to be the last string in this enumeration, then $N=0$).

Given $x$, $k$ and $N$, we can find $\Omega_k$ just by waiting until $N$ strings appear after $x$. If $N=0$, the statement $C(\Omega_k \cnd x) = O(\log k)$ is obvious, so we assume that $N>0$. Let $l=\lfloor\log N\rfloor$. We claim that
$ l\le \eps + O(\log n)$ because $x$ is $\eps$-antistochastic.
 Indeed, chop the set of all enumerated strings
into portions of size
 $2^l$. The last portion might be incomplete;
however $x$ does not fall in that portion since there are $N\ge 2^l$ elements after $x$. Every complete portion can be
described by its ordinal number and $k$. The total number of complete portions
is less than $O(2^k/2^l)$.
Thus the profile $P_x$ contains the
pair $(k-l+O(\log k),l)$.
By antistochasticity of $x$, we have $k-l+O(\log k)\ge k-\eps$ or
$k-l+O(\log k)+l\ge n-\eps$. The first inequality
implies that $l\le \eps+O(\log k)$. The second inequality
cannot happen provided the constant $c$ is large enough.

We see that to get $\Omega_k$ from $x$ we need only $\eps+O(\log n)$ bits of information since $N$ can be specified by $\log N=l$ bits, and $k$ can be specified by $O(\log k)$ bits.

We have shown that $C(\Omega_k \cnd x)<\eps + O(\log n)$. It remains to use the Kol\-mo\-go\-rov--Le\-vin symmetry of information theorem that says that $C(u)-C(u \cnd v)=C(v)-C(v \cnd u)+O(\log C(u,v))$ (see, e.g.,~\cite{shen15,lv,suv}). Indeed,
$$
C(x)+C(\Omega_k \cnd x)=C(x \cnd \Omega_k)+C(\Omega_k)+O(\log k).
$$
The strings $x$ and $\Omega_k$ have the same complexity with logarithmic precision, so
$C(\Omega_k \cnd x)=C(x \cnd \Omega_k)+O(\log n)$.
\end{proof}

\begin{remark}
From this lemma it follows that there are at most $2^{\eps + O(\log n)}$ $\eps$-antistochastic strings of complexity $k$ and length $n$. Indeed, we have $C(x \cnd \Omega_k) \le \eps + O(\log n)$ for each string $x$ of this type.
\end{remark}

Before stating the general result (Theorem~\ref{th2} below), let us consider its special case as example. Let us prove that every $O(\log n)$-antistochastic string $x$ of length $n$ and complexity $k$ can be restored from its first $k$ bits using $O(\log n)$ advice bits. Indeed, let $A$ consist of all strings of the same length as $x$ and having the same $k$
first bits as $x$. The complexity of $A$ is at most $k+O(\log n)$. On the other hand, the profile of $x$ contains the pair $(C(A), n-k)$. Since $x$ is $O(\log n)$-antistochastic, we have $C(A)\ge k-O(\log n)$. Therefore, $C(A)=k+O(\log n)$. Since $C(A \cnd x)=O(\log n)$, by symmetry of information we have $C(x \cnd A)=O(\log n)$ as well.

The same arguments work for every simple $k$-element subset of indices (instead of first $k$ bits): if $I$ is a $k$-element subset of  $\{1,\dots,n\}$  and $C(I)=O(\log n)$, then $x$ can be restored from   $x_I$ and some auxiliary logarithmic amount of information. Here $x_I$ denotes  the string obtained from $x$ by replacing all the symbols with indices outside $I$
by the blank symbol (a  fixed symbol different from $0$ and $1$); note that $x_I$ contains information both about $I$ and the bits of $x$ in $I$-positions.

Surprisingly, the same result is true for \emph{every}   $k$-element subset of indices, even if that subset
is complex:  $C(x \cnd x_I)=O(\log n)$.
The following theorem provides an even more general statement.

\begin{theorem}\label{th2}
Let $x$ be an $\eps$-antistochastic string of length $n$ and complexity $k$.
Assume that  a finite set $A$ is given such that $x \in A$ and $|A| \le 2^{n-k}$.
Then $C(x\cnd A) \le 2\eps + O(\log C(A) +\log n ) $.
\end{theorem}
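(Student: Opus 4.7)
My plan is to combine antistochasticity with the Kolmogorov--Levin symmetry of information. Since $x\in A$ and $|A|\le 2^{n-k}$, the pair $(C(A),n-k)$ lies in the profile $P_x$. Applying $\eps$-antistochasticity to this pair shows that either $C(A)>k-\eps$ or $C(A)+(n-k)>n-\eps$, and both inequalities yield
$$C(A)\ge k-\eps-O(\log n).$$
By symmetry of information, $C(x\cnd A)=k-C(A)+C(A\cnd x)+O(\log C(A)+\log n)$, so it suffices to establish
$$C(A\cnd x)\le (C(A)-k)^+ + \eps + O(\log C(A)+\log n),$$
where $(\cdot)^+$ denotes the positive part.

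To bound $C(A\cnd x)$ I would reconstruct $A$ from $x$ in stages. By Lemma~\ref{l1}, $\Omega_k$ is obtainable from $x$ with $\eps+O(\log n)$ bits of advice; by Lemma~\ref{l2}, $\Omega_{C(A)}$ is then obtainable with $(C(A)-k)^+ + O(\log C(A))$ extra bits. Given $\Omega_{C(A)}$ one knows exactly which programs of length at most $C(A)$ halt, and can therefore enumerate every finite set $B$ with $C(B)\le C(A)$; filtering by $x\in B$ and $|B|\le 2^{n-k}$ leaves a finite list containing $A$, from which $A$ is identified by its index at cost $\log N_x$ additional bits, where $N_x$ denotes the number of such candidate sets.

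The crux is to show $\log N_x=O(\log C(A)+\log n)$. A straightforward counting argument gives an \emph{average} bound $N_y\le 2^{C(A)-k+1}$ over all $y$, but this alone does not bound $N_x$ for the specific antistochastic $x$. The main obstacle is therefore to show that $x$ is not an outlier with unusually high multiplicity; the intended device is to package the high-multiplicity strings into an auxiliary finite set $S\ni x$ of small complexity and small cardinality, so that the profile pair $(C(S),\log|S|)$ lies in $P_x$ but violates the antistochasticity condition. Keeping $C(S)$ at $O(\log C(A)+\log n)$---rather than letting it grow with $\log|S|$---is the subtle point, and is where the tight closeness of $P_x$ to $P_{\mathrm{min}}$ is really used.

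Putting the pieces together, $C(A\cnd x)\le \eps+(C(A)-k)^+ + O(\log C(A)+\log n)$, and substitution into the symmetry identity finishes the proof: the $C(A)-k$ terms cancel when $C(A)\ge k$, while for $C(A)<k$ the bound $k-C(A)\le \eps+O(\log n)$ from the first step costs at most one extra $\eps$. Either way we obtain $C(x\cnd A)\le 2\eps+O(\log C(A)+\log n)$, as required.
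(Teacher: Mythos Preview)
Your overall scaffold (use symmetry of information, reduce to bounding $C(A\cnd x)$) is sound, but the place you flag as ``the crux'' is a genuine gap, and in fact the specific claim you need there is false.

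You assert $\log N_x=O(\log C(A)+\log n)$, where $N_x$ is the number of finite sets $B$ with $C(B)\le C(A)$, $|B|\le 2^{n-k}$, and $x\in B$. This is wrong whenever $C(A)-k$ is not logarithmic. For any string $z$ with $C(z\cnd x)\le C(A)-k-O(\log n)$ the set $\{x,z\}$ has complexity at most $C(A)$ and cardinality $2\le 2^{n-k}$; there are about $2^{C(A)-k}$ such $z$, so $\log N_x\ge C(A)-k-O(\log n)$. Thus your index-in-the-list step can cost $C(A)-k$ extra bits, and your bound for $C(A\cnd x)$ collapses to the trivial $C(A\cnd x)\le (C(A)-k)+\eps+(C(A)-k)+O(\cdots)$, which gives nothing. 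Note also that your target inequality $C(A\cnd x)\le (C(A)-k)^{+}+\eps+O(\cdots)$ is, by symmetry of information, equivalent (up to one $\eps$) to the theorem itself, so you have not yet reduced the difficulty.

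Your ``intended device'' (collect the high-multiplicity strings into a set $S$ of complexity $O(\log C(A)+\log n)$) does not work as stated: to determine which strings have high multiplicity among \emph{all} sets of complexity at most $C(A)$ you need $\Omega_{C(A)}$, so the natural description of $S$ already has complexity about $C(A)$, not $O(\log C(A))$. The paper circumvents exactly this obstacle. Instead of looking at all sets of complexity $\le C(A)$, it locates $A$ by its index $N$ in an enumeration of such sets, lets $m$ be the length of the common prefix of $N$ and $\Omega_{C(A)}$, and (in the hard case $m<k$) takes only the chunk $\mathcal A$ of $2^{l}$ consecutively enumerated sets around $A$, where $l=C(A)-m$. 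This chunk has complexity $\le m$ (its index is the common prefix, which one can get from $\Omega_m$, hence from $\Omega_k$, hence from $x$ with $\eps$ bits). The set $B$ of strings covered at least $2^{C(A\cnd x)-\eps}$ times by members of $\mathcal A$ then satisfies $x\in B$, $C(B)\le m$, and $\log|B|\le l+(n-k)-C(A\cnd x)+\eps$. Antistochasticity applied to $(m,\log|B|)$ yields $m+l-C(A\cnd x)\ge k-2\eps$, i.e.\ $C(A)-C(A\cnd x)\ge k-2\eps$, and symmetry of information finishes. The essential idea you are missing is this replacement of ``all simple sets'' by a chunk that is simultaneously small and describable from $x$ with $\eps$ bits.
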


Informally, this theorem says that any $k$ bits of information about $x$ that restrict $x$ to some subset of size $2^{n-k}$, are enough to reconstruct $x$. The $O()$-term in the right hand side depends on $C(A)$ that can be very large, but the dependence is logarithmic.

For instance, let $I$ be a $k$-element set of indices and let $A$
be the set of all strings of length $n$ that coincide with $x$
on $I$. Then the complexity of $A$ is $O(n)$ and hence
$C(x \cnd A) \le 2\eps + O(\log n ) $.

\begin{proof}
We may assume that $k < n - \eps - c \log n$ where $c$
is the constant from Lemma~\ref{l1}. Indeed, otherwise $A$ is so small ($n-k\le \eps + c$) that
$x$ can be identified by its index in $A$ in  $\eps+c$ bits.
Then by Lemma~\ref{l1}  both $C(\Omega_k \cnd x)$ and  $C(x \cnd \Omega_k)$
are less than $\eps + O(\log n)$.

In all the inequalities below we ignore additive terms of order $O(\log C(A)+\log n)$.
However, we will not ignore additive terms  $\eps$ (we do not require $\eps$ to be small, though it is the most interesting case).

Let us give a proof sketch first. There are two cases that are considered separately in the proof: $A$ is either ``non-stochastic'' or ``stochastic'' --- more precisely, appears late or early in the enumeration of all sets of complexity at most $C(A)$. The first case is easy: if $A$ is non-stochastic, then $A$ is informationally close to $\Omega_{C(A)}$ that determines $\Omega_k$ that determines $x$ (up to a small amount of auxiliary information, see the details below).

In the second case $A$ is contained in some simple small family $\mathcal{A}$ of sets; then we consider the set of all $y$ that are covered by many elements of $\mathcal{A}$ as an explanation for $x$, and use the assumption that $x$ is antistochastic to get the bound for the parameters of this explanation. This is main (and less intuitive) part of the argument.

Now let us provide the details for both parts. Run the algorithm that enumerates all finite sets of complexity at most $C(A)$, and consider $\Omega_{C(A)}$ as the number of sets in this enumeration. Let $N$ be the index of $A$ in this enumeration (so $N\le \Omega_{C(A)}$). Let $m$ be the number of common leading bits in the binary notations of $N$ and $\Omega_{C(A)}$ and let $l$ be the number of remaining bits. That is, $N=a2^l+b$ and $\Omega_{C(A)}=a2^l+c$ for some integer $a<2^m$ and $b\le c<2^l$.  For $l>0$ we can estimate $b$ and $c$ better: $ b < 2^{l-1}\le c < 2^l$. Note that $l+m$ is equal to the length of the binary  notation of $\Omega_{C(A)}$, that is, $C(A)+O(1)$. Now let us distinguish two cases mentioned:

\textbf{Case 1}: $m\ge k$. In this case we use the inequality
$C(x \cnd \Omega_k)\le \eps$. (Note that we omit terms of order $O(\log C(A)+\log n)$ here and in the following considerations.) The number   $\Omega_k$ can be retrieved from $\Omega_{m}$ since $m\ge k$ (Lemma~\ref{l2}), and the latter can be found given $m$ leading bits of $\Omega_{C(A)}$. Finally, $m$ leading bits of $\Omega_{C(A)}$ can be found given $A$, as $m$ leading bits of the index $N$ of the code of $A$ in the enumeration of all strings of complexity at most $C(A)$.

\textbf{Case 2}: $m<k$.
This case is more elaborated and we need an additional construction.

\begin{lemma}\label{lem:model}
The pair $(m,l+n-k-C(A\cnd x)+\eps)$ belongs to $P_x$.
\end{lemma}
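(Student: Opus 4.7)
The plan is to construct an explicit explanation for $x$ witnessing membership in $P_x$ of the required pair. The natural candidate is built from the ``block'' family
$$
\mathcal{A} := \bigl\{ B : C(B) \le C(A),\ |B| \le 2^{n-k},\ \text{the index of } B \text{ in the enumeration of such sets has first } m \text{ bits equal to } a \bigr\}.
$$
By construction $A \in \mathcal{A}$ and $|\mathcal{A}| \le 2^l$, and $\mathcal{A}$ is specified by $C(A)$, $m$, and the $m$-bit prefix $a$ of $\Omega_{C(A)}$, so $C(\mathcal{A}) \le m + O(\log C(A) + \log n)$.

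The heart of the argument is to show that $\mathcal{A}$ is essentially determined by $x$. By Lemma~\ref{l1} (applicable thanks to the reduction $k < n - \eps - c\log n$ at the start of the proof), we have $C(\Omega_k \cnd x) \le \eps + O(\log n)$; and since $m < k \le C(A)$, Lemma~\ref{l2} lets us recover the $m$-bit prefix of $\Omega_{C(A)}$ from $\Omega_k$ with only $O(\log)$ overhead. Supplying $m$ and $C(A)$ as logarithmic advice then gives
$$
C(\mathcal{A} \cnd x) \le \eps + O(\log C(A) + \log n).
$$

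Next I would show that $x$ is covered by many members of $\mathcal{A}$. Writing $\mathcal{A}(y) := \{B \in \mathcal{A} : y \in B\}$ and $t := C(A \cnd x)$, the pair $(x, \mathcal{A})$ determines $A$ up to its index in $\mathcal{A}(x)$, so $C(A \cnd x) \le \log|\mathcal{A}(x)| + C(\mathcal{A} \cnd x) + O(1)$. Combined with the previous bound,
$$
\log|\mathcal{A}(x)| \ge t - \eps - O(\log C(A) + \log n).
$$

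Finally I would take the explanation $S := \{ y : |\mathcal{A}(y)| \ge 2^{t-\eps} \}$. Then $x \in S$ by the last inequality (up to the logarithmic slack we are ignoring); double counting yields
$$
2^{t-\eps} \cdot |S| \;\le\; \sum_{B \in \mathcal{A}} |B| \;\le\; 2^l \cdot 2^{n-k},
$$
so $\log|S| \le l + n - k - t + \eps$; and $S$ is computable from $\mathcal{A}$ and the threshold $2^{t-\eps}$, giving $C(S) \le m + O(\log C(A) + \log n)$. This delivers the pair $(m,\, l + n - k - C(A\cnd x) + \eps)$ in $P_x$.

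The main obstacle is the second step: the trivial union $\bigcup \mathcal{A}$ only yields the weaker pair $(m,\, l+n-k)$, and the gain of $t - \eps$ in the second coordinate is exactly the cost of pinning down $A$ inside $\mathcal{A}$ given $x$. Making this quantitative requires $C(\mathcal{A} \cnd x)$ to be as small as $\eps$, which is precisely where Lemma~\ref{l1} (and hence the antistochasticity of $x$ together with the assumption $k < n - \eps - c\log n$) enters the argument.
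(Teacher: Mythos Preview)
Your proposal is correct and follows essentially the same route as the paper: build the ``block'' family $\mathcal{A}$ from the $m$-bit prefix $a$, show $C(\mathcal{A}\cnd x)\le\eps$ via Lemma~\ref{l1} and Lemma~\ref{l2}, deduce that $x$ lies in at least $2^{C(A\cnd x)-\eps}$ members of $\mathcal{A}$, and take the high-multiplicity set $S$ with a double-counting bound on $|S|$. The paper phrases the construction as chopping the enumeration of all objects of complexity at most $C(A)$ into chunks of size $2^{l-1}$ and then filtering by $|B|\le 2^{n-k}$, but this is the same family you describe (read your ``index of $B$'' as the index in the full enumeration, with the size constraint applied as a filter afterwards---which is what makes the prefix $a$ agree with the $a$ already defined).
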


As usual, we omit $O(\log C(A)+\log n)$ terms that should be added to both components of the pair this is statement.

\begin{proof}[Proof of Lemma~\ref{lem:model}]
We construct a set $B \ni x$ of complexity $m$ and log-size $ l+n-k-C(A\cnd x)+\eps$ in two steps.

\emph{First step}. We construct a family $\mathcal A$ of sets that is an explanation for $A$ such that $A\in \mathcal A$ and $C(\mathcal A)\le m$,  $C(\mathcal A \cnd x)\le \eps$ and $|\mathcal A|\le2^l$. To this end chop all strings of complexity at most $C(A)$ in chunks of size $2^{l-1}$ (or $1$ if $l=0$) in the order they are enumerated. The last chunk may be incomplete, however, in this case $A$ belongs to the previous (complete) chunk due to the choice of $m$ as the length of common prefix of $\Omega_{C(A)}$ and $N$.

Let $\mathcal A$ be the family of those finite sets that belong to the chunk containing $A$ and have cardinality at most $2^{n-k}$.
By construction $|\mathcal A|\le2^l$. Since $\mathcal A$ can be found from $a$ (common leading bits in $N$ and $\Omega_{C(A)}$), we have $C(\mathcal A)\le m$. To prove that $C(\mathcal A \cnd x)\le \eps$ it suffices to show that $C(a \cnd x)\le \eps$. We have
$C(\Omega_{k} \cnd x) \le\eps$ and from $\Omega_k$ we can  find $\Omega_m$ and hence the number $a$ as
the $m$ leading bits of  $\Omega_{C(A)}$ (Lemma~\ref{l2}).

\emph{Second step}. We claim that $x$ appears in at least $2^{C(A \cnd x) - \eps}$ sets from $\mathcal A$. Indeed, assume that
$x$  falls in $K$ of them. Given $x$, we need $C(\mathcal{A}\cnd  x)\le \eps $ bits to describe $\mathcal{A}$ plus $\log K$ bits to describe $A$ by its ordinal number in the list of elements of $\mathcal{A}$ containing $x$. Therefore, $C(A \cnd x)\le \log K+\eps$.

Let $B$ be the set of all strings that appear in at least  $2^{C(A \cnd x)-\eps}$ of sets
from $\mathcal A$.
As shown, $x$ belongs to $B$. As $B$ can be found from $\mathcal A$, we have $C(B)\le m$. To finish the proof of Lemma~\ref{lem:model}, it remains to estimate the cardinality of $B$. The total number of strings in all sets from
 $\mathcal A$ is at most
$2^{l}\cdot 2^{n - k}$,  and each element of $B$ is covered at least $2^{C(A\cnd x)-\eps}$ times, so $B$ contains at most $2^{l + n - k-C(A \cnd x)+\eps}$ strings.
\end{proof}

Since $x$ is $\eps$-antistochastic, Lemma~\ref{lem:model} implies that either $m\ge k-\eps$ or $m+l+n-k-C(A|x)+\eps\ge n-\eps$.
In the case $m\ge k-\eps$ we can just repeat the arguments from Case 1
and show that $C(x \cnd A)\le 2\eps$.

In the case
$m+l+n-k-C(A|x)+\eps\ge n-\eps$ we recall that $m+l=C(A)$ and by symmetry of information
$C(A)-C(A \cnd x)=C(x)-C(x \cnd A)=k-C(x 
\cnd A)$. Thus we have
$
n-C(x \cnd A)+\eps\ge n-\eps.
$
\end{proof}

\begin{remark}
Notice that every string that satisfied the claim of Theorem~\ref{th2} is $\delta$-antistochastic for $\delta\approx 2\eps$. Indeed, if $x$ has length $n$, complexity $k$ and is not $\delta$-antistochastic for some $\delta$, then $x$ belongs to some set $A$  that has $2^{n-k}$ elements
and whose complexity is less than  $k-\delta+O(\log n)$ (Lemma~\ref{l3}).
Then $C(x \cnd A)$ is large, since
$$
k=C(x)\le C(x \cnd A)+C(A)+O(\log n)\le  C(x\cnd A)+k-\delta+O(\log n)
$$
and hence $C(x \cnd A)\ge \delta- O(\log n)$ while the claim of Theorem~\ref{th2} says that $C(x\cnd A) \le 2\eps+O(\log C(A)+\log n)$.
\end{remark}

\section{Antistochastic strings and list decoding\\ from erasures}\label{sec:erasures}

Theorem~\ref{th2} implies the existence of good codes. We cannot use antistochastic strings directly as codewords, since there are only few of them. Instead, we consider a weaker property and note that every antistochastic string has this property (so it is non-empty); then we prove that there are many strings with this property and they can be used as codewords.

\begin{definition}
A string $x$ of length $n$ is called $(\eps,k)$-holographic if
for all $k$-element set of indexes $I\subset\{1,\dots,n\}$ we have $C(x \cnd x_I)<\eps$.
\end{definition}

\begin{theorem}\label{th4}
For all $n$ and all $k\le n$ there are at least $2^{k}$ strings of length $n$ that are $(O(\log n),k)$-holographic.
\end{theorem}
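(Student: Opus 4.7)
The plan is to exhibit a single set $X \subseteq \{0,1\}^n$ of cardinality $2^k$ and Kolmogorov complexity $O(\log n)$ satisfying the following \emph{small-fiber property}: for every $k$-element subset of indices $I \subseteq \{1,\dots,n\}$ and every $z \in \{0,1\}^k$, the number of $x \in X$ with $x_I = z$ is at most $L = O(n/\log n)$. Every element $x \in X$ is then $(O(\log n), k)$-holographic: given $x_I$ one reconstructs $X$ from $n,k$ (costing $O(\log n)$ bits), enumerates the at most $L$ elements of $X$ whose projection to $I$ agrees with $x_I$, and specifies $x$ among them by its index, so $C(x \cnd x_I) \le \log L + O(\log n) = O(\log n)$. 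Since $|X| = 2^k$, this yields the theorem.

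For the existence of such an $X$ I would use a counting argument in the spirit of Corollary~\ref{c1}. For a uniformly random $2^k$-element subset $X \subseteq \{0,1\}^n$ and a fixed pair $(I,z)$, the variable $N_{I,z} := |\{x \in X \colon x_I = z\}|$ is hypergeometric with expectation $1$; a short computation (using the identity $\binom{N-t}{M-t}\binom{N}{t} = \binom{N}{M}\binom{M}{t}$) gives $\Pr[N_{I,z} \ge t] \le 1/t!$. Union-bounding over all $\binom{n}{k}\cdot 2^k \le 2^{n+k}$ pairs $(I,z)$, the probability that $X$ violates the small-fiber property is at most $2^{n+k}/t!$. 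By Stirling's formula this is less than $1$ for $t = O((n+k)/\log(n+k))$, which, since $k\le n$, is $L = O(n/\log n)$ and satisfies $\log L = O(\log n)$. Hence a good $X$ exists; taking the lexicographically first such $X$ makes it computable from $n,k,L$, and in particular $C(X) = O(\log n)$.

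The main step requiring care is the hypergeometric tail bound and the subsequent union bound; both are routine once the combinatorial identity above is invoked. Note that this argument does not use antistochastic strings directly. However, as the paragraph preceding Theorem~\ref{th4} advertises, Corollary~\ref{c1} together with Theorem~\ref{th2} already shows that the class of $(O(\log n), k)$-holographic strings is non-empty: every $O(\log n)$-antistochastic string of length $n$ and complexity $k$ is holographic, because the coordinate flat $A_I(x_I)$ has size $2^{n-k}$ and Theorem~\ref{th2} applies. The probabilistic construction above upgrades this qualitative existence from a single string to $2^k$ strings.
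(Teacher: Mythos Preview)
Your argument is correct, but it is a genuinely different proof from the paper's. The paper does not construct a small-fiber code directly. Instead it argues as follows: by Corollary~\ref{c1} and Theorem~\ref{th2} there exists an $(O(\log n),k)$-holographic string $x$ of complexity $k+O(\log n)$; the property ``$(c\log n,k)$-holographic'' is computably enumerable given $n,k$, so $x$ can be described by $n,k$ and its ordinal number in this enumeration; since $C(x)\ge k-O(\log n)$, that ordinal number has at least $k-O(\log n)$ bits, forcing the enumeration to contain at least $2^{k-O(\log n)}$ strings. A final shift $k\mapsto k+O(\log n)$ upgrades this to $2^k$.

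Your route bypasses antistochasticity entirely: you build the code $X$ by a hypergeometric tail bound plus union bound and then take the lex-first witness to make $C(X)=O(\log n)$. This is essentially the classical probabilistic construction of erasure list-decodable codes (the paper itself points to Guruswami for this, and gives an analogous probabilistic argument in the Appendix for Theorem~\ref{escode}). The advantage of your approach is that it is self-contained and does not rely on Theorem~\ref{th2}, the paper's main technical result. The advantage of the paper's approach is precisely that reliance: the whole point of this section is to show that the existence of good erasure codes falls out of the theory of antistochastic strings, so deriving Theorem~\ref{th4} from Corollary~\ref{c1} and Theorem~\ref{th2} is the desired demonstration, not an avoidable detour.
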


\begin{proof}
By Corollary~\ref{c1} and Theorem~\ref{th2} for all $n$ and $k\le n$ there exists an
$(O(\log n),k)$-holographic string $x$ of length $n$ and complexity $k$ (with $O(\log n)$ precision).  This implies that there are many of them. Indeed,
the set of all  $(O(\log n),k)$-holographic strings of length $n$ can be identified by $n$ and $k$.
More specifically, given $n$ and $k$ we can enumerate all  $(O(\log n),k)$-holographic strings
and hence $x$ can be identified by $k,n$ and its ordinal number in that enumeration.
The complexity of $x$ is at least $k-O(\log n)$, so this ordinal number is at least
$k-O(\log n)$, so there are at least $2^{k-O(\log n)}$ holographic strings. 

Our claim was a bit stronger: we promised $2^k$ holographic strings, not $2^{k-O(\log n)}$ of them. For this we can take $k'=k+O(\log n)$ and get $2^k$ strings that are $(O(\log n), k')$-holographic.  The difference between $k$ and $k'$ can then be moved into the first $O(\log n)$, since the first $k'-k$ erased bits can be provided as an advice of logarithmic size.
\end{proof}

Theorem~\ref{th4} provides a family of codes that are
list decodable from erasures. Indeed, consider $2^k$ strings that are $(O(\log n), k)$-holographic, as codewords. This code
is  list decodable from $n-k$ erasures with list size $2^{O(\log n)}=\poly(n)$. Indeed, assume that an adversary erases $n-k$ bits of a codeword $x$, so only $x_I$ remains for some set $I$ of $k$ indices. Then $x$ can be reconstructed from  
$x_I$ by a program of length $O(\log n)$. Applying all programs
of that size to $x_I$, we obtain a list of size $\poly(n)$ which contains $x$.

Although the existence of list decodable codes with such parameters
can be established by the probabilistic method~\cite[Theorem 10.9 on p. 258]{guru},
we find it interesting that a seemingly unrelated notion of antistochasticity provides such codes. In fact, a more general statement where erasures are 
replaced by any 
 other type of information loss, can be obtained in the same way.

\begin{theorem}
\label{escode}
Let $k,n$ be some integers and $k\le n$. Let $\mathcal{A}$ be a  
family of $2^{n-k}$-element subsets of $\{0,1\}^n$ that contains $|\mathcal{A}| = 2^{\poly(n)}$ subsets. Then there is a set $S$ of size at least $2^{k - O(\log n)}$ such that every $A \in \mathcal{A}$ contains at most $\poly(n)$ 
strings from 
$S$.
\end{theorem}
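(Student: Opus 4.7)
The plan is to mirror the proof of Theorem~\ref{th4}, but with $\mathcal{A}$ treated as a condition in every complexity. Let $c$ be a constant to be chosen, and put
$$
S = \bigl\{x\in\{0,1\}^n : C(x\cnd A,\mathcal{A})\le c\log n\text{ for every }A\in\mathcal{A}\text{ with }x\in A\bigr\}.
$$
The list-size bound is then immediate: for each fixed $A\in\mathcal{A}$, every $x\in A\cap S$ is produced from $A$ and $\mathcal{A}$ by some program of length at most $c\log n$, so $|A\cap S|\le 2^{c\log n + O(1)}=\poly(n)$.

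The work is in lower-bounding $|S|$. First I would exhibit a single element of $S$ of high complexity conditional on $\mathcal{A}$. Relativizing Corollary~\ref{c1} to $\mathcal{A}$ yields a string $x$ of length $n$ with $C(x\cnd \mathcal{A})=k+O(\log n)$ that is $O(\log n)$-antistochastic relative to $\mathcal{A}$. For any $A\in\mathcal{A}$ containing $x$ we have $C(A\cnd \mathcal{A})\le\log|\mathcal{A}|=\poly(n)$, and hence $\log C(A\cnd \mathcal{A})=O(\log n)$. Substituting into the version of Theorem~\ref{th2} relativized to $\mathcal{A}$ gives
$$
C(x\cnd A,\mathcal{A})\le O(\log n)+O\bigl(\log C(A\cnd \mathcal{A})+\log n\bigr)=O(\log n),
$$
so $x\in S$ once $c$ is chosen large enough.

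To finish, I would count $S$ as in Theorem~\ref{th4}. Given $n$ and $\mathcal{A}$ (as a finite object), the set $S$ is recursively enumerable: for each of the $2^{\poly(n)}$ sets $A\in\mathcal{A}$, the set $\{y:y\notin A\}\cup\{y:C(y\cnd A,\mathcal{A})\le c\log n\}$ is r.e., and $S$ is their finite intersection. Hence every $y\in S$ satisfies $C(y\cnd \mathcal{A})\le\log|S|+O(\log n)$. Applied to the specific $x$ constructed above this yields
$$
k-O(\log n)\le C(x\cnd \mathcal{A})\le\log|S|+O(\log n),
$$
so $|S|\ge 2^{k-O(\log n)}$, as required.

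The main obstacle, and the reason the conclusion does not drop out of Theorem~\ref{th4} directly, is that the $O(\log C(A))$ error term in Theorem~\ref{th2} can be non-trivial here: a single $A\in\mathcal{A}$ may have very large unconditional complexity if $\mathcal{A}$ itself does. Conditioning every complexity on $\mathcal{A}$ is exactly what repairs this, since the hypothesis $|\mathcal{A}|=2^{\poly(n)}$ forces $C(A\cnd \mathcal{A})\le\poly(n)$, so its logarithm is absorbed into the $O(\log n)$ slack uniformly in $A$.
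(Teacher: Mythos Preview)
Your argument is correct, and it takes a genuinely different route from the paper's own proof.

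The paper does \emph{not} relativize. Instead it first assumes $C(\mathcal{A})=O(\log n)$, so that every $A\in\mathcal{A}$ has unconditional complexity $C(A)\le\poly(n)$; then it uses an ordinary (unrelativized) $O(\log n)$-antistochastic string $x$ of complexity $k$, applies Theorem~\ref{th2} as stated to get $C(x\cnd A)\le D\log n$, and defines $S=\{y:C(y\cnd A)\le D\log n\text{ for all }A\in\mathcal{A}\text{ with }y\in A\}$. To remove the simplicity assumption on $\mathcal{A}$, the paper then runs a worst-case argument: for a fixed polynomial $p$ and given $k,n$, the family $\mathcal{A}_{kn}$ witnessing the smallest possible constant $D_{kn}$ can be computed from $k,n$ by brute force, hence $C(\mathcal{A}_{kn})=O(\log n)$, and the first part of the proof applies to it.

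Your approach sidesteps that two-step structure by conditioning on $\mathcal{A}$ throughout. This is cleaner and handles arbitrary $\mathcal{A}$ in one shot: the bound $C(A\cnd\mathcal{A})\le\log|\mathcal{A}|=\poly(n)$ is automatic, and the relativized versions of Corollary~\ref{c1} and Theorem~\ref{th2} do hold (all the ingredients---Lemma~\ref{l3}, Lemma~\ref{l1}, Lemma~\ref{l2}, symmetry of information, the chunking construction in Lemma~\ref{lem:model}---relativize uniformly, with constants independent of the oracle). The price is that you must invoke these relativized statements, which the paper never formulates; the paper's route stays entirely within the unrelativized theory already developed, at the cost of the extra compactness-style step at the end. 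One small point worth making explicit in your write-up: the constant $c$ you choose depends on the degree of the polynomial bounding $\log|\mathcal{A}|$, but not on $\mathcal{A}$ itself or on $n$, since all the hidden constants in the relativized $O(\log n)$ terms are absolute.
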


Theorem~\ref{th4} is a special case of this theorem: in Theorem~\ref{th4}
the family $\mathcal{A}$ consists of all sets of 
the form $\{x'\in\{0,1\}^n\mid x'_I=x_I\}$ for different $n$-bit strings 
$x$ and different sets $I$ of  $k$ indexes.

\begin{proof}
Assume first that Kolmogorov complexity of $\mathcal{A}$ 
is $O(\log n)$.
 
We use the same idea as in the proof of Theorem \ref{th4}. We may assume without loss of generality that the union of sets in $\mathcal{A}$ contains all strings, by adding some elements to $\mathcal{A}$. It can be done in such a way that 
$C(\mathcal{A})$ remains $O(\log n)$ and the size of $\mathcal{A}$
is still $2^{\poly(n)}$.

Let $x$ be an $O(\log n)$-antistochastic string of length $n$ 
and complexity~$k$. By our assumption the string $x$ belongs to some set in $\mathcal{A}$. The family $\mathcal{A}$ has low complexity 
and is not very large, 
hence for every $A \in \mathcal{A}$ we have $C(A) \le \poly(n)=2^{O(\log n)}$. 
By Theorem \ref{th2} for every $A\in\mathcal{A}$ containing $x$ we have $C(x \cnd A) < D\log n$ for some constant $D$. 

Now we define $S$ as the set of all strings $y$ such that  $C(y 
\cnd A)< D\log n$ for every $A\in \mathcal{A}$ containing $y$.
From the definition of $S$
it follows that for every $A\in \mathcal{A}$ there are at most 
$2^{D \log n}$ strings in $S$ that belong to $A$.
So now we  need to prove only that $|S| \ge 2^{k - O(\log n)}$. 

Since $C(\mathcal{A})=O(\log n)$, 
we can enumerate $S$ by a program of length $O(\log n)$. 
The antistochastic string $x$ belongs to $S$; 
on the other hand, $x$ can be identified by its ordinal number 
in that enumeration of $S$. So we conclude that the logarithm 
of this ordinal number (and therefore the log-cardinality of $S$) is at least  $k - O(\log n)$.

It remains to get rid of the assumption $C(\mathcal{A})=O(\log n)$.
To this end, fix a 
polynomial $p(n)$ in place of $\poly(n)$
in the statement of the theorem.
Then for any given $k,n$ with $k\le n$ 
consider the smallest 
$D=D_{kn}$ such that the statement of the theorem holds 
for $D\log n$ in place of $O(\log n)$. We have to show that $D_{kn}$
is bounded by a constant. For every 
$k,n$ the value $D_{kn}$  
and a family $\mathcal{A}=\mathcal{A}_{kn}$ witnessing
that $D$ cannot be made smaller that $D_{kn}$ 
can be computed by a brute force from $k,n$.
This implies that $C(\mathcal{A}_{kn})=O(\log n)$.
Hence $D_{kn}=O(1)$, as $D_{kn}$ is the worst family for $k,n$.
\end{proof}

Like Theorem~\ref{th4}, Theorem~\ref{escode}
can also be easily proved by the probabilistic method; see Theorem \ref{probproof} in Appendix.

\section{Antistochastic strings and\\ total conditional complexity}\label{sec:total}

The conditional complexity $C(a\cnd b)$ of $a$ given $b$ is defined as a minimal length of a program that maps $b$ to $a$. We may require that the program is total; in this way we get another (bigger) version of conditional complexity that was used, e.g., in~\cite{bauwens}.

Total conditional complexity is defined as the shortest length of a total program $p$ mapping $b$ to $a$: $CT(a \cnd b)=\min\{l(p)\mid D(p,b)=a$  and  $D(p,y)$ is defined for all $y\}$.

It is easy to show that the total conditional complexity may be much higher 
than the plain conditional complexity (see, e.g.,~\cite{shen12}). Namely, there exist strings $x$ and $y$ of length $n$ such that $CT(x \cnd y) \ge n $ and $C(x \cnd y) = O(1)$. Antistochastic strings help to extend this result (unfortunately, with slightly worse accuracy):
 
\begin{theorem}
\label{thltc}

For every $k$ and $n$ there exist strings $x_1 \ldots x_k$ of length $n$ such that:

\textup{(1)} $C(x_i \cnd x_j) = O(\log k+\log n)$ for every $i$ and $j$.

\textup{(2)} $CT(x_i \cnd  x_1 \ldots x_{i-1}x_{i+1} \ldots x_k) \ge n - O(\log k+\log n)$ for every $i$.
\end{theorem}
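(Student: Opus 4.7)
The plan is to realize $x_1,\dots,x_k$ as the consecutive $n$-bit blocks of an antistochastic string of length $kn$ and complexity about $n$, and then derive part~(1) directly from Theorem~\ref{th2} and part~(2) directly from the definition of antistochasticity. First, by Corollary~\ref{c1} I would fix an $O(\log(kn))$-antistochastic string $X$ of length $kn$ and complexity $n+O(\log(kn))$, and let $x_i$ denote the $i$-th block of $X$ of length $n$.

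For (1), given $i$ and $j$ with $i\ne j$, I would consider the set
$$
A_{-j}=\{Y\in\{0,1\}^{kn}\mid \text{the $j$-th block of }Y\text{ equals }x_j\}
$$
of cardinality $2^{kn-n}$ and complexity at most $n+O(\log(kn))$ (it is determined by $x_j$, the index $j$, and the parameters $n,k$). Since $X\in A_{-j}$, Theorem~\ref{th2} yields $C(X\cnd A_{-j})=O(\log k+\log n)$, and hence $C(x_i\cnd x_j)=O(\log k+\log n)$.

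For (2), I would argue by contraposition: assume $CT(x_i\cnd x_{-i})<m$ and let $p$, of length less than $m$, be a total program with $D(p,x_{-i})=x_i$. Because $p$ is total, the set
$$
A_p=\{(y_1,\dots,y_k)\in(\{0,1\}^n)^k\mid y_i=D(p,y_1,\dots,y_{i-1},y_{i+1},\dots,y_k)\}
$$
is a well-defined finite set of cardinality exactly $2^{(k-1)n}$, contains $X$, and is determined by $p$, $i$, $n$, $k$, so $C(A_p)<m+O(\log k)$. Plugging $(C(A_p),\log|A_p|)$ into the antistochasticity condition for $X$ (either $C(A_p)\ge n-O(\log(kn))$, or $C(A_p)+\log|A_p|\ge kn-O(\log(kn))$) gives, in both alternatives, $m\ge n-O(\log k+\log n)$, which is part~(2).

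I expect the only nontrivial step to be the passage from a short total program to the set $A_p$: it is precisely \emph{totality} of $p$ that makes $A_p$ well defined with exactly $2^{(k-1)n}$ elements and with complexity bounded by $|p|+O(\log k)$. A merely partial program would not yield such a controlled set, so the antistochasticity argument would not close. Once $A_p$ is in place, both parts reduce to routine applications of Theorem~\ref{th2} and of the profile definition of antistochastic strings.
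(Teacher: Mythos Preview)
Your argument is correct and follows essentially the same approach as the paper: both take an $O(\log(kn))$-antistochastic string of length $kn$ and complexity about $n$, split it into $k$ blocks, derive (1) from Theorem~\ref{th2}, and derive (2) by turning a short total program into a set of size $\le 2^{(k-1)n}$ containing $X$ and invoking the antistochasticity inequality. The only cosmetic differences are that the paper uses the \emph{image} $\{\tilde p(y):y\in\{0,1\}^{(k-1)n}\}$ while you use the \emph{graph} $A_p$, and a harmless slip on your side: $|A_p|$ is at most (not ``exactly'') $2^{(k-1)n}$, since $D(p,\cdot)$ need not output length-$n$ strings, and $C(A_p)$ should carry an $O(\log n)$ term as well as $O(\log k)$---neither affects the conclusion.
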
  

\begin{proof}
 
Let $x$ be an $O(\log (kn))$-antistochastic string of length $kn$ and complexity $n$. We consider $x$ as the concatenation of $k$ strings of length $n$: $$x=x_1 \ldots x_k,\qquad
x_i \in \{0,1\}^n.$$ Let us show that the strings $x_1, \ldots, x_k$ satisfy the requirements of the theorem. 

The first statement is a simple corollary of antistochasticity of $x$. Theorem~\ref{th2} implies that $C(x \cnd x_j) = O(\log (kn))$ for every $j$. As $C(x_i \cnd x) = O(\log (kn)$ for every $i$, we have 
$C(x_i \cnd x_j)=O(\log (kn))$ for every $i$ and $j$.

To prove the second statement consider a total program $p$ such that $p(x_1\ldots x_{i-1} x_{i+1}\ldots x_k) = x_i$. Our aim is to show that $p$ is long.  Change $p$ to a total program  $\tilde{p}$ such that
  $\tilde{p}(x_1\ldots x_{i-1}x_{i+1}\ldots x_k) = x$ and $l(\tilde{p}) \le l(p) + O(\log (kn))$.
  Consider the set  
$$A := \{\tilde {p}(y)\mid y \in \{0, 1\}^{k(n-1)} \}.$$ 
Note that $A$ contains antistochastic string $x$ of length $kn$ and complexity~$n$ and $\log|A| \le k\cdot (n-1)$.  By the definition of antistochasticity we get $C(A) \ge n - O(\log (kn))$. By the construction  of $A$ it follows that
$$ C(A) \le l(\tilde{p})+O(\log (kn)) \le l(p)+O(\log (kn)).$$
So, we get $l(p) \ge n-O(\log (kn))$, i.e., $$CT(x_i \cnd x_1 \ldots x_{i-1}x_{i+1}\ldots x_k)\ge n-O(\log(kn)).$$
\end{proof}

\begin{remark}
This example, as well as the example from~\cite{ver},
shows that for total conditional complexity the symmetry of information does not hold.
Indeed, let $CT(a)=CT(a \cnd \Lambda)=C(a)+O(1)$.
Then 
$$
CT(x_1) - CT(x_1 
\cnd x)=(n+O(\log kn))-O(\log k)=n+O(\log kn)
$$ 
while 
$$
CT(x) - CT(x \cnd x_1)= (n+O(\log kn))- (n+O(\log kn))=O(\log kn)
$$
for strings $x,x_1$ from Theorem~\ref{thltc}.

A big question in time-bounded Kolmogorov complexity is
whether the symmetry of information 
holds for time-bounded Kolmogorov complexity.
Partial answers to this question were obtained in \cite{lm,lw,lr}.
Total conditional complexity $CT(b 
\cnd a)$ is defined as the shortest length of a total
program $p$ mapping $b$ to $a$. Being total that program halts on all inputs
in time bounded by a total computable function $f_p$ of its input. Thus
total conditional complexity may be viewed as a variant of
time bounded conditional complexity.
Let us stress that the upper bound $f_p$ for time
may depend (and does depend) on $p$  in a non-computable way.
Thus $CT(b 
\cnd a)$ is a rather far approximation to
time bounded Kolmogorov complexity.
\end{remark}

\section*{Acknowledgments}

I would like to thank Alexander Shen and Nikolay Vereshchagin for useful discussions, advice and remarks.

\section*{Appendix}

Here we provide a probabilistic proof of Theorem \ref{escode}:

\begin{theorem}
\label{probproof}
Let $k,n$ be some integers, $k\le n$. Let $\mathcal{A}$ be a finite
family of 
$2^{n-k}$-element subsets of $\{0,1\}^n$.
Then there is a set $S$ of size at least $2^k$ such that every $A \in \mathcal{A}$ contains at most $\log|\mathcal{A}|+1$ strings from $S$.
\end{theorem}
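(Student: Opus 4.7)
The plan is to prove this by the probabilistic method: sample $S$ at random and verify that with positive probability the size bound and the intersection bounds hold simultaneously.

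Concretely, I would include each $x \in \{0,1\}^n$ in $S$ independently with probability $p := 2^{k+1-n}$, so that $\mathbb{E}|S|=2^{k+1}$. A standard Chernoff bound then gives $\Pr[|S|<2^k]\le \exp(-\Omega(2^k))$, which is well below $1/2$ (and can be made arbitrarily small by increasing the constant in $p$). For each fixed $A\in\mathcal{A}$, the intersection $|S\cap A|$ is $\mathrm{Binomial}(|A|,p)$ with mean $|A|\cdot p = 2^{n-k}\cdot 2^{k+1-n}=2$. The key inequality I would use is the union-bound moment estimate
$$
\Pr\bigl[|S\cap A|\ge t\bigr]\;\le\;\binom{|A|}{t}p^{t}\;\le\;\frac{(|A|p)^{t}}{t!}\;=\;\frac{2^{t}}{t!},
$$
which has the crucial feature that the $t!$ in the denominator grows superexponentially.

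Next I would take $t$ slightly larger than $\log|\mathcal{A}|$ and apply the union bound over the at most $|\mathcal{A}|$ choices of $A$: the probability that any $A$ fails the intersection bound is at most $|\mathcal{A}|\cdot 2^{t}/t!$, and since $\log(t!)\sim t\log t$ swamps $\log|\mathcal{A}|$ once $t\gtrsim\log|\mathcal{A}|$, this quantity drops below $1/2$. Combining with the size estimate, there exists an outcome in which $|S|\ge 2^{k}$ and $|S\cap A|\le t$ for every $A\in\mathcal{A}$, which is the desired conclusion.

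The main obstacle is not conceptual but book-keeping: pinning down the exact bound ``$\log|\mathcal{A}|+1$'' rather than ``$O(\log|\mathcal{A}|)$'' requires some care with the constants in $p$ and $t$, and for very small families (say $|\mathcal{A}|=O(1)$, where the Chernoff estimate on $|S\cap A|$ cannot by itself give a single-digit bound with high probability) one should fall back to a direct greedy construction, noting that a set $S\subset\{0,1\}^n$ of size $2^{k}$ meeting each of a constant number of $2^{n-k}$-element sets in at most $\log|\mathcal{A}|+1$ points is trivial to exhibit by hand (choose $S$ mostly outside $\bigcup\mathcal{A}$, possible whenever $|\mathcal{A}|\cdot 2^{n-k}<2^{n}-2^{k}$). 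The rest is routine.
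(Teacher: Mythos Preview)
Your proposal is correct and follows essentially the same probabilistic argument as the paper: include each string independently with probability $2^{k+1-n}$, bound $\Pr[|S|<2^{k}]$ by a concentration inequality (you use Chernoff, the paper uses Chebyshev), and for each $A\in\mathcal{A}$ bound $\Pr[|S\cap A|\ge t]$ by $\binom{|A|}{t}p^{t}\le (|A|p)^{t}/t!$, then union-bound over $\mathcal{A}$. Your care about the constants for very small $|\mathcal{A}|$ is even a bit more scrupulous than the paper's write-up, which simply takes $i=\log|\mathcal{A}|+1$ directly.
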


\begin{proof}
Let us show that a randomly chosen subset of $\{0,1\}^n$ of size approximately $2^k$ has the required property with a positive probability. More precisely, we assume that every $n$-bit string is included in $S$ independently with probability $\frac{1}{2^{n-k-1}}$. 

The cardinality of $S$ is the random variable with binomial distrubution. 
The expectation of $|S|$ is equal to $2^n \cdot \frac{1}{2^{n-k -1}} = 2^{k+1}$. The variance $\sigma^2$ of $|S|$ is equal to $2^n \cdot \frac{1}{2^{n-k -1}} \cdot (1 - \frac{1}{2^{n-k -1}}) \le 2^{k+1} $. By Chebyshev's inequality 
$$
P(| \cnd S| - E| \ge \frac{1}{2} \sigma) \le \left(\frac{1}{2}\right)^2 \Rightarrow P(| \cnd S| - 2^{k+1}| \ge 2^k) \le \frac{1}{4}.  
$$ 
Hence, the event  ``$|S| < 2^k$'' happens with probability at most 
$\frac{1}{4} < \frac{1}{2}$.

It remains to show that the event 
``there is a set in $ \mathcal{A}$ containing more than $2^{O(\log n)}$ 
strings from $S$'' happens with probability less than $\frac{1}{2}$.
To this end we show that for every $A \in \mathcal{A}$ the event ``$A$ contains more than $\log|\mathcal{A}|+1$ elements of $S$'' 
has probability less than $\frac{1}{2} \cdot \frac{1}{|\mathcal{A}|}$. 

Fix a $2^{n-k}$-element set $A\in\mathcal A$. 
For every $i$ the probability 
that $S$ contains at least $i$ elements from $S$ is at most
$$
\binom{|A|}{i}\cdot 2^{-(n-k +1)i} \le \frac{|A|^{i}}{i!} \cdot 2^{(-n+k-1) \cdot i} = \frac{2^{-i}}{i!}\le 2^{-i}.
$$
This value is less than $\frac{1}{2}\frac{1}{|\mathcal{A}|}$
for  $i =\log|\mathcal{A}|+1$.
\end{proof}

\end{document}